\definecolor{dgreen}{rgb}{0,0.5,0}
\Crefname{construction}{Construction}{Constructions}
\Crefname{assumption}{Assumption}{Assumptions}
\Crefname{subsubsection}{Section}{Sections}
\theoremstyle{plain}
\newtheorem{theorem}{Theorem}
\newtheorem{lemma}[theorem]{Lemma}
\theoremstyle{definition}
\newtheorem{definition}[theorem]{Definition}
\numberwithin{theorem}{section}
\newenvironment{namedproof}[1]{\paragraph{Proof of #1.}\hspace{-1em}}{\hfill$\blacksquare$\vspace{1em}}
\newcommand{\nc}{\newcommand}
\nc{\DMO}{\DeclareMathOperator}
\nc{\Moracle}{\MM^{\mathsf{oracle}}}
\nc{\tilMoracle}{\til{\MM}^{\mathsf{oracle}}}
\nc{\SimulateReduction}{\texttt{SimulateReduction}\xspace}
\nc{\TestReduction}{\texttt{DistinguishReduction}\xspace}
\nc{\SimulateSampling}{\texttt{SimulateSampling}\xspace}
\nc{\SimulateRegression}{\texttt{SimulateRegression}\xspace}
\nc{\Osample}{{\MO_{\mathsf{samp}}}}
\nc{\Oregress}{{\MO_{\mathsf{regress}}}}
\nc{\Oregressp}{{\MO'_{\mathsf{regress}}}}
\nc{\Obandits}{{\MO_{\mathsf{bandits}}}}
\nc{\dom}{\mathsf{dom}}
\nc{\SF}{\mathscr{F}}
\nc{\Fchernoff}{\MF^{\mathsf{chernoff}}}
\DMO{\prox}{prox}
\DMO{\Span}{span}
\DMO{\UCB}{UCB}
\DMO{\LCB}{LCB}
\nc{\expl}[2]{\ME^{#1}_{#2}}
\nc{\tilmdp}[1]{\til \MM({#1})}
\nc{\barpdp}[2]{\ol \MP_{#1}({#2})}
\nc{\barmdp}[1]{\ol \MM({#1})}
\nc{\hatmdp}[1]{\wh \MM({#1})}
\nc{\rem}[2]{\MR_{#1}({#2})}
\nc{\Pigen}{\Pi^{\rm gen}}
\nc{\Pidet}{\Pi^{\rm det}}
\nc{\PiZ}{\Pi_{\SZ}^{\rm markov}}
\nc{\und}[3]{\MU_{{#1}}^{{#2}}({#3})}
\nc{\zlow}[2]{\MZ_{{#1}}^\lowv({#2})}
\nc{\dg}{\dagger}
\nc{\bB}{\mathbf{B}}
\nc{\unif}{\mu_{\rm unif}}
\nc{\indsig}[2]{\mathcal{I}_{#1}({#2})}
\nc{\total}{{\rm fin}}
\nc{\early}{{\rm pre}}
\nc{\zsink}{z_{\rm sink}}
\nc{\lowv}{{\rm low}}
\nc{\oo}[1]{\texttt{o}({#1})}
\nc{\posnrm}[1]{\left[ {#1} \right]_+}
\nc{\negnrm}[1]{\left[ {#1} \right]_-}
\nc{\tvnrm}[1]{\left\| {#1} \right\|_1}
\nc{\absval}[1]{\left| {#1} \right|}
\nc{\normalize}[1]{\mathfrak{n}\left({#1}\right)}
\nc{\SZ}{\textsf{Z}}
\nc{\SO}{\textsf{O}}
\nc{\suff}[2]{{\rm suff}_{#1}({#2})}
\nc{\UPhi}{\mathscr{U}_{X,H}}
\nc{\UPhis}{\til{\mathscr{U}}_{X,H,\MF}}
\nc{\SV}{\mathscr{V}}
\nc{\Phiset}{\Phi_{X,H}}
\nc{\Phisets}{\til{\Phi}_{X,H,\MF}}
\nc{\Lyu}{{\mathtt{Lyu}}}
\nc{\wAlg}{{\widetilde \Alg}}
\nc{\ApproxMDP}{\texttt{ConstructMDP}\xspace}
\nc{\mainalg}{\texttt{BaSeCAMP}\xspace} % BArycentric SpannEr policy Cover with Approximate MdP
\nc{\bspanner}{\texttt{BarySpannerPolicy}\xspace}
\nc{\gamvec}{\gamma}
\nc{\til}{\widetilde}
\nc{\td}{\tilde}
\nc{\wh}{\widehat}
\nc{\todo}[1]{\ifnum\Comments=1 {\color{red}  [TODO: #1]}\fi}
\nc{\old}[1]{\ifnum\Comments=1 {\color{brown}  [COPIED: #1]}\fi}
\definecolor{darkgreen}{rgb}{0.0, 0.5, 0.0}
\nc{\noah}[1]{\ifnum\Comments=1 {\color{darkgreen} [ng: #1]}\fi}
\nc{\dhruv}[1]{\ifnum\Comments=1 {\color{purple} [dr: #1]}\fi}
\nc{\BP}{\mathbb{P}}
\nc{\BM}{\mathbb{M}}
\nc{\bbapx}{\bb^{\rm apx}}
\nc{\bbapxs}[1]{\bb^{\rm apx, {#1}}}
\nc{\fools}[3]{\MF_{#3}({#1}, {#2})}
\nc{\fool}[2]{\MF({#1},{#2})}
\nc{\clip}[2]{{\rm clip}\left[ \left. {#1} \right| {#2} \right]}
\nc{\imax}{\omega}
\DMO{\conv}{conv}
\nc{\MH}{\mathcal{H}}
\nc{\CH}{\mathscr{H}}
\nc{\CB}{\mathscr{B}}
\nc{\cD}{\mathscr{D}}
\nc{\MC}{\mathcal{C}}
\nc{\st}{\star}
\nc{\lng}{\langle}
\nc{\rng}{\rangle}
\DMO{\OOPT}{opt}
\nc{\dopt}[2]{\ell_{\OOPT}({#1},{#2})}
\nc{\grad}{\nabla}
\nc{\MG}{\mathcal{G}}
\nc{\MP}{\mathcal{P}}
\nc{\PP}{\mathbb{P}}
\nc{\TT}{\mathbb{T}}
\nc{\TTmax}{\TT_{\max}}
\DMO{\Ham}{Ham}
\DMO{\Gap}{Gap}
\DMO{\GD}{GD}
\DMO{\GDA}{GDA}
\DMO{\EG}{EG}
\DMO{\OGDA}{OGDA}
\DMO{\Unif}{Unif}
\DMO{\Tr}{Tr}
\nc{\ul}{\underline}
\nc{\ol}{\overline}
\nc{\Qu}{\ul{Q}}
\nc{\Qo}{\ol{Q}}
\nc{\Ro}{\ol{R}}
\nc{\Vu}{\ul{V}}
\nc{\Vo}{\ol{V}}
\nc{\RanQ}{\Delta Q}
\nc{\RanV}{\Delta V}
\nc{\clipQ}{\Delta \breve{Q}}
\nc{\frzQ}{\Delta \mathring{Q}}
\nc{\clipV}{\Delta \breve{V}}
\nc{\clipdelta}{\breve{\delta}}
\nc{\cliptheta}{\breve{\theta}}
\nc{\delmin}{\Delta_{{\rm min}}}
\nc{\delmins}[1]{\Delta_{{\rm min},{#1}}}
\nc{\gapfinal}[1]{\max \left\{ \frac{\frzQ_{{#1}}^{k^\st}(x,a)}{2H}, \frac{\delmin}{4H} \right\}}
\nc{\post}[2]{R({#1}; {#2})}
\nc{\posts}[3]{R_{#3}({#1}; {#2})}
\nc{\MAJ}{\mathsf{MAJ}}
\nc{\Dnull}{D^{\circ}}
\nc{\BKW}{\mathtt{BKW}}
\nc{\Dec}{\mathtt{Dec}}
\nc{\delreg}{\delta_{\mathsf{reg}}}
\nc{\delreal}{\delta_{\mathsf{real}}}
\nc{\Sreg}{S_{\mathsf{Reg}}}
\nc{\Treg}{T_{\mathsf{Reg}}}
\nc{\PC}{\mathtt{PC}}
\nc{\alphaPC}{\alpha_{\mathsf{PC}}}
\nc{\TPC}{T_{\mathsf{PC}}}
\nc{\SPC}{S_{\mathsf{PC}}}
\nc{\delsmall}{{\delta_{\mathsf{small}}}}
\nc{\CL}{\mathtt{ContrastLearn}}
\nc{\Select}{\mathtt{Select}}
\nc{\piunif}{{\pi_{\mathsf{unif}}}}
\nc{\picov}{{\pi_{\mathsf{cov}}}}
\nc{\ZZ}{\mathbb{Z}}
\nc{\sk}{{\mathsf{sk}}}
\nc{\Enc}{\mathtt{Enc}}
\nc{\EntLPN}{\mathtt{EntangleLPN}}
\nc{\mureg}{{\mu_{\mathsf{reg}}}}
\nc{\PPE}{\mathtt{PPE}}
\nc{\FQI}{\mathtt{FQI}}
\nc{\False}{\mathtt{False}}
\nc{\True}{\mathtt{True}}
\nc{\epreg}{\epsilon_{\mathsf{reg}}}
\nc{\algnst}[1]{\begin{align*}#1\end{align*}}
\nc{\algn}[1]{\begin{align}#1\end{align}}
\nc{\matx}[1]{\left(\begin{matrix}#1\end{matrix}\right)}
\nc{\pimix}{{\pi_{\mathsf{mix}}}}
\nc{\BPC}{{B_{\mathsf{PC}}}}
\nc{\size}{\mathrm{size}}
\nc{\OLIVE}{\texttt{OLIVE}}
\nc{\NP}{\textsf{NP}}
\nc{\RP}{\textsf{RP}}
\nc{\cprp}{c_{\mathsf{PRP}}}
\nc{\Mtoy}{\MM_{\mathsf{toy}}}
\nc{\Brute}{\mathtt{Brute}}
\nc{\nuu}{\nu}
\nc{\bel}[1]{\mathbf{b}({#1})}
\nc{\nbel}[1]{\bar{\mathbf{b}}({#1})}
\nc{\sbel}[2]{\mathbf{b}'_{#1}({#2})}
\nc{\nsbel}[2]{\bar{\mathbf{b}}'_{#1}({#2})}
\nc{\bv}{\mathbf{v}}
\nc{\bone}{\mathbf{1}}
\nc{\bX}{\mathbf{X}}
\nc{\be}{\mathbf{e}}
\nc{\bY}{\mathbf{Y}}
\nc{\bG}{\mathbf{G}}
\nc{\bz}{\mathbf{z}}
\nc{\bw}{\mathbf{w}}
\nc{\bA}{\mathbf{A}}
\nc{\bJ}{\mathbf{J}}
\nc{\bK}{\mathbf{K}}
\nc{\bb}{\mathbf{b}}
\nc{\ba}{\mathbf{a}}
\nc{\bs}{\mathbf{s}}
\nc{\bzero}{\mathbf{0}}
\nc{\bi}{\mathbf{i}}
\nc{\Edistinct}{\ME^{\mathsf{distinct}}}
\nc{\bc}{\mathbf{c}}
\nc{\bC}{\mathbf{C}}
\nc{\BR}{\mathbb R}
\nc{\BA}{\mathbb{A}}
\nc{\SA}{\mathscr{A}}
\nc{\BC}{\mathbb C}
\nc{\bx}{\mathbf{x}}
\nc{\bS}{\mathbf{S}}
\nc{\bM}{\mathbf{M}}
\nc{\bR}{\mathbf{R}}
\nc{\bN}{\mathbf{N}}
\nc{\by}{\mathbf{y}}
\nc{\sy}{y}
\nc{\sx}{x}
\nc{\MO}{\mathcal O}
\nc{\MQ}{\mathcal{Q}}
\nc{\CO}{\mathscr{O}}
\nc{\MU}{\mathcal{U}}
\nc{\ME}{\mathcal{E}}
\nc{\MN}{\mathcal{N}}
\nc{\MK}{\mathcal{K}}
\nc{\MM}{\mathcal{M}}
\nc{\MS}{\mathcal{S}}
\nc{\MT}{\mathcal{T}}
\nc{\BF}{\mathbb F}
\nc{\BQ}{\mathbb Q}
\nc{\MX}{\mathcal{X}}
\nc{\MA}{\mathcal{A}}
\nc{\MD}{\mathcal{D}}
\nc{\MB}{\mathcal{B}}
\nc{\MZ}{\mathcal{Z}}
\nc{\MJ}{\mathcal{J}}
\nc{\MW}{\mathcal{W}}
\nc{\MR}{\mathcal{R}}
\nc{\MY}{\mathcal{Y}}
\nc{\ML}{\mathcal{L}}
\nc{\BZ}{\mathbb Z}
\nc{\BN}{\mathbb N}
\nc{\ep}{\epsilon}
\nc{\gapfn}[1]{\varepsilon_{#1}}
\nc{\ggapfn}[2]{\varphi_{#1}({#2})}
\nc{\epsahk}{\gapfn{0}}
\nc{\BH}{\mathbb H}
\nc{\BG}{\mathbb{G}}
\nc{\D}{\Delta}
\nc{\MF}{\mathcal{F}}
\nc{\One}{\mathbbm{1}}
\nc{\bOne}{\mathbf{1}}
\nc{\Aopt}{\mathcal{A}^{\rm opt}}
\nc{\Amul}{\mathcal{A}^{\rm mul}}
\nc{\SP}{\mathsf P}
\nc{\SQ}{\mathsf Q}
\nc{\DO}{\accentset{\circ}{\D}}
\nc{\mf}{\mathfrak}
\nc{\mfp}{\mathfrak{p}}
\nc{\mfq}{\mf{q}}
\nc{\Sp}{\mbox{Spec}}
\nc{\Spm}{\mbox{Specm}}
\nc{\hookuparrow}{\mathrel{\rotatebox[origin=c]{90}{$\hookrightarrow$}}}
\nc{\hookdownarrow}{\mathrel{\rotatebox[origin=c]{-90}{$\hookrightarrow$}}}
\nc{\hra}{\hookrightarrow}
\nc{\tra}{\twoheadrightarrow}
\nc{\sgn}{{\rm sgn}}
\nc{\aut}{{\rm Aut}}
\nc{\Hom}{{\rm Hom}}
\nc{\img}{{\rm Im}}
\DMO{\id}{Id}
\DMO{\supp}{supp}
\DMO{\KL}{KL}
\nc{\kld}[2]{\KL({#1}||{#2})}
\nc{\ren}[2]{D_2({#1}||{#2})}
\nc{\chisq}[2]{\chi^2({#1}||{#2})}
\nc{\tvd}[2]{D_{\mathsf{TV}}\left({#1}, {#2}\right)}
\nc{\hell}[2]{H^2({#1}, {#2})}
\DMO{\BSS}{BSS}
\DMO{\BES}{BES}
\DMO{\BGS}{BGS}
\DMO{\poly}{poly}
\nc{\indep}{\perp}
\DMO{\sink}{sink}
\DMO{\nosink}{nosink}
\nc{\sinks}{s^{\sink}}
\nc{\sinkobs}{o^{\sink}}
\nc{\fp}[1]{\MP_1({#1})}
\nc{\BO}{\mathbb{O}}
\nc{\BT}{\mathbb{T}}
\nc{\RR}{\mathbb{R}}
\nc{\NN}{\mathbb{N}}
\nc{\Gradient}{\nabla}
\DMO{\diag}{diag}
\nc{\norm}[1]{\left \lVert #1 \right \rVert}
\DMO*{\EE}{\mathbb{E}}
\nc{\LPN}{\mathsf{LPN}}
\DMO{\Ber}{Ber}
\nc{\Regress}{\mathtt{Regress}}
\nc{\LFC}{\mathtt{LearnFromCorr}}
\nc{\RegressAlg}{\mathtt{RegressAlg}}
\nc{\DrawTraj}{\mathtt{DrawTrajectory}}
\nc{\pizero}{{\pi_{\mathsf{zero}}}}
\nc{\Tred}{{T_{\mathsf{red}}}}
\nc{\epred}{{\epsilon_{\mathsf{red}}}}
\nc{\TriAlg}{\mathtt{GenerateTriangleLPN}}
\nc{\Alg}{\mathtt{Alg}}
\nc{\AffSample}{\mathtt{AffSample}}
\nc{\br}{\mathbf{r}}
\nc{\TV}{{\mathsf{TV}}}
\DMO{\Law}{Law}
\DMO{\Sym}{Sym}
\nc{\bu}{\mathbf{u}}
\nc{\Reg}{\mathtt{Reg}}
\nc{\Breg}{B_{\mathsf{Reg}}}
\DMO{\PR}{Pr}
\renewcommand{\Pr}{\PR}
\DMO*{\Prr}{Pr}
\nc{\E}{\mathbb{E}}
\nc{\ra}{\rightarrow}
\renewcommand{\t}{\top}
\title{On Learning Parities with Dependent Noise}
\author{Noah Golowich\thanks{Email: \texttt{nzg@mit.edu}. Supported by a Fannie \& John Hertz Foundation Fellowship and an NSF Graduate Fellowship.} \\ MIT \and Ankur Moitra\thanks{Email: \texttt{moitra@mit.edu}. Supported in part by a Microsoft Trustworthy AI Grant, an ONR grant and a David and Lucile Packard Fellowship.} \\ MIT \and Dhruv Rohatgi\thanks{Email: \texttt{drohatgi@mit.edu}. Supported by a U.S. DoD NDSEG Fellowship.} \\ MIT}
\date{\today}
\begin{document}

\maketitle 

\begin{abstract}
In this expository note we show that the learning parities with noise (LPN) assumption is robust to weak dependencies in the noise distribution of small batches of samples. This provides a partial converse to the linearization technique of \cite{arora2011new}. The material in this note is drawn from a recent work by the authors \cite{golowich2024exploration}, where the robustness guarantee was a key component in a cryptographic separation between reinforcement learning and supervised learning.
\end{abstract}

\section{Introduction}

Learning parities with noise (LPN) is a central problem in average-case complexity, with numerous connections to learning theory and cryptography \cite{blum2003noise, pietrzak2012cryptography}. In the standard search version of the problem, the goal is to recover a fixed $n$-bit secret $\sk \in \BF_2^n$ from independent samples from the distribution $\LPN_{n,\delta}(\sk)$ defined below, for some parameter $\delta \in (0,1/2)$:

\begin{definition}[LPN distribution]
Let $n \in \NN$, $\delta \in (0,1/2)$, and $\sk \in \BF_2^n$. We define the \emph{LPN} distribution $\LPN_{n,\delta}(\sk)$ with secret $\sk$ and noise level $1/2-\delta$ to be the distribution of the random variable $(u,\langle u,\sk\rangle+e)$, where $u \sim \Unif(\BF_2^n)$ and $e \sim \Ber(1/2-\delta)$ are independent.
\end{definition}

Despite intense interest, there is no known polynomial-time algorithm for learning parities with noise even when the noise level is only $\Omega(\log^2(n)/n)$ \cite{brakerski2018anonymous}. In fact, when the noise level is constant, the best known algorithm has time complexity $2^{O(n/\log n)}$ \cite{blum2003noise}, only slightly outperforming exhaustive search. As a result, it is widely conjectured that the problem is computationally intractable. Instantiations of this conjecture have been used to conditionally construct a variety of cryptographic primitives, including symmetric encryption \cite{gilbert2008encrypt,applebaum2009fast}, public-key encryption \cite{alekhnovich2003more, yu2016cryptography}, commitment schemes \cite{jain2012commitments}, and collision-resistant hashing \cite{yu2019collision}. The hardness of LPN has also been employed to prove computational lower bounds throughout theoretical machine learning, from learning Hidden Markov Models \cite{mossel2005learning} and reinforcement learning in block MDPs \cite{golowich2023exploring,golowich2024exploration} to learning quantum states \cite{gollakota2022hardness} and separations between multimodal and unimodal learning \cite{karchmer2024stronger}.

In cryptography, \emph{robustness properties} of concrete computational assumptions such as LPN are of great theoretical and practical importance, due to the possibility of side-channel attacks. As one example, what happens to an encryption scheme where the secret key is partially leaked? Robustness of the related Learning with Errors (LWE) assumption to non-uniform distributions over the secret \cite{goldwasser2010robustness, brakerski2020hardness} implies that schemes based on LWE are still secure. As another example, what happens if the encryptions rely on imperfect randomness? Little is known about the robustness of LWE or LPN to non-standard noise distributions; see \cref{sec:related} for related prior work.

In this note, we give a self-contained exposition of an LPN robustness property recently proven by the authors as part of a larger result \cite{golowich2024exploration}. Essentially, this property asserts that learning parities with \emph{small batches of weakly dependent noise} is still hard, under the standard LPN hardness assumption with appropriate parameters. A formal statement requires the following definition:

\begin{definition}[Batch LPN distribution]
Let $n,k \in \NN$, $p \in \Delta(\BF_2^k)$, and $\sk \in \BF_2^n$. We define the \emph{batch LPN} distribution $\LPN_{n,p}(\sk)$ with secret $\sk$ and joint noise distribution $p$ to be
\[(u^i,\langle u^i,\sk\rangle + e^i)_{i=1}^k \sim \LPN_{n,p}(\sk)\]
where $u^1,\dots,u^k \sim \Unif(\BF_2^n)$ and $(e^1,\dots,e^k) \sim p$ are independent.
\end{definition}

Given independent batch samples from $\LPN_{n,p}(\sk)$, the goal of batch LPN with joint noise distribution $p$ is to recover $\sk$. Note that $k=1$ and $p = \Ber(1/2-\delta)$ exactly corresponds to one sample from $\LPN_{n,\delta}(\sk)$, i.e. the standard LPN distribution with noise level $1/2-\delta$. However, for $k \geq 2$, if $p$ is not a product distribution, the connection to standard LPN is not evident. 

\paragraph{When is batch LPN hard?} It's easy to construct seemingly innocuous joint distributions $p$ for which batch LPN can actually be solved in $\poly(n)$ time \--- e.g. the uniform distribution over vectors in $\BF_2^k$ with Hamming weight exactly one \--- because the dependencies enable simulating noiseless samples. This algorithmic approach is generalized by the Arora-Ge \emph{linearization attack} \cite{arora2011new}, which remains the state-of-the-art for batch LPN. However, this attack still only applies to joint distributions with specific structure (see \cref{sec:related}). In the absence of this structure, is batch LPN computationally hard? In analogy with the seminal robustness results discussed above, can this be proven under a standard LPN assumption? 

\Cref{lemma:construct-corr-lpn} takes a first step in this direction. The family of \emph{Santha-Vazirani} sources \cite{santha1986generating} is a broad family of distributions that naturally models imperfect, physical noise sources where the bias of each new bit may have some dependence on previous bits, but the dependence is never too strong:

\begin{definition}[c.f. \cite{santha1986generating}]\label{def:sv-source}
Let $k\in\NN$, $\delta \in (0,1/2)$, and $p \in \Delta(\BF_2^k)$. We say $p$ is a \emph{$\delta$-Santha-Vazirani source} if for all $i \in [k]$ and $x \in \BF_2^k$ it holds that
\[\Prr_{X\sim p}[X_i=1|X_{<i}=x_{<i}] \in [1/2-\delta, 1/2+\delta].\]
\end{definition}

In words, a distribution over $\BF_2^k$ is a $\delta$-Santha-Vazirani source if each bit is biased at most $\delta$ from uniform, under any conditioning of all preceding bits. \Cref{lemma:construct-corr-lpn} shows that a sample from the batch LPN distribution with a $\delta$-Santha-Vazirani source can be efficiently constructed using $k$ independent samples from the standard LPN distribution with noise level $1/2 - O_k(\delta)$.

\begin{theorem}[\cite{golowich2024exploration}]\label{lemma:construct-corr-lpn}
There is an algorithm $\EntLPN$ with the following property. Fix $n,k \in \NN$, $\delta\in(0,1/2^{k+3})$, and $p \in \Delta(\BF_2^k)$. Suppose that $p$ is a $\delta$-Santha-Vazirani source. For every $\sk \in \BF_2^n$, for independent samples $(a_i,y_i)_{i=1}^k$ from $\LPN_{n,2^{k+2}\delta}(\sk)$, the output of $\EntLPN((a_i,y_i)_{i=1}^k, p, \delta)$ has distribution $\LPN_{n,p}(\sk)$. Moreover, the time complexity of $\EntLPN$ is $\poly(n, 2^k)$.
\end{theorem}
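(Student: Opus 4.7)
The plan is to let $\EntLPN$ sample a uniformly random invertible matrix $C \in GL_k(\BF_2)$ together with a vector $z \in \BF_2^k$ drawn from a $C$-dependent distribution $\nu_C$, and then output $(u_i,v_i)_{i=1}^k$ with $u_i = (Ca)_i$ and $v_i = (Cy)_i + z_i$. Because $C$ is always invertible, $u = Ca$ is uniform in $(\BF_2^n)^k$; and since $y_j = \langle a_j,\sk\rangle + e_j$, the noise vector $f := (v_i - \langle u_i,\sk\rangle)_{i=1}^k$ equals $Ce + z$, so it suffices to design the joint law of $(C,z)$ so that $Ce + z \sim p$.

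I would verify this using Fourier analysis over $\BF_2^k$. Writing $\eta = 2^{k+2}\delta$ and using that $\E_e[(-1)^{\langle t,e\rangle}] = (2\eta)^{|t|}$ for the product noise $e$ (where $|\cdot|$ denotes Hamming weight), the independence of $(C,z)$ from $e$ gives
\[\wh{\Law(f)}(s) \;=\; \E_C\!\left[\wh{\nu_C}(s)\cdot(2\eta)^{|C^Ts|}\right].\]
The key observation is that $GL_k(\BF_2)$ acts transitively on $\BF_2^k\setminus\{0\}$, so for uniform $C$ and any fixed $s \neq 0$, $C^Ts$ is uniform on $\BF_2^k\setminus\{0\}$; in particular $\Pr[|C^Ts|=1] = k/(2^k-1)$. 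This motivates defining $\nu_C$ through its Fourier transform by $\wh{\nu_C}(0) := 1$ and, for $s \neq 0$,
\[\wh{\nu_C}(s) \;:=\; \wh p(s)\cdot\frac{2^k-1}{2\eta k}\cdot\mathbbm{1}\!\left[|C^Ts|=1\right].\]
Plugging this in yields $\wh{\Law(f)}(s) = \wh p(s)$ for every $s$, and hence $f \sim p$.

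The main obstacle is verifying that $\nu_C$, specified through its Fourier transform, is a nonnegative probability measure. This reduces to a key Fourier bound for $\delta$-Santha-Vazirani sources: for every $s \neq 0$, $|\wh p(s)| \leq 2\delta$. I would prove this by letting $i^* = \max\{i : s_i = 1\}$ and conditioning on $x_{<i^*}$, which factors out $\E[(-1)^{x_{i^*}} \mid x_{<i^*}] \in [-2\delta, 2\delta]$ by \Cref{def:sv-source}. Then inverse Fourier gives
\[\nu_C(x) \;=\; 2^{-k}\!\left[1 + \frac{2^k-1}{2\eta k}\sum_{s:|C^Ts|=1}\wh p(s)(-1)^{\langle s,x\rangle}\right],\]
and since exactly $k$ nonzero vectors $s$ satisfy $|C^Ts|=1$, the bracketed term lies in $1 \pm (2^k-1)\cdot 2\delta k/(2\eta k) = 1 \pm (2^k-1)/2^{k+2} \subset [3/4,\,5/4]$. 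Hence $\nu_C(x) \geq (3/4)\cdot 2^{-k} > 0$, so $\nu_C$ is a valid distribution. Sampling $C$ uniformly in $GL_k(\BF_2)$ takes $\poly(k)$ time, and tabulating $\nu_C$ on all of $\BF_2^k$ via inverse Hadamard transform takes time $O(k\cdot 2^k)$, so $\EntLPN$ runs in $\poly(n,2^k)$ time overall.
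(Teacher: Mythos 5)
Your construction is correct, but it takes a genuinely different route from the paper's. The paper builds the batch sequentially, realizing at step $i$ the required conditional bias $\Pr_{Z\sim p}[Z_i=1\mid Z_{<i}]$ as a \emph{random affine function of the secret} (via the trick $(a_i,y_i)\mapsto(a_i-\beta,y_i+\alpha)$), and proving that such a random affine function exists by perturbing the uniform solution of a linear system whose matrix $B_{uv}=\langle u,v\rangle \bmod 2$ is shown to be well-conditioned (\cref{lem:b-conditioning,lemma:cond-dist-linearization}). You instead give a one-shot construction \--- multiply the whole batch by a uniform $C\in GL_k(\BF_2)$ (which keeps $Ca$ uniform and turns the product noise $e$ into $Ce$) and add an independent correction $z\sim\nu_C$ \--- and you verify correctness character-by-character in Fourier space, using that $C^\t s$ is uniform on $\BF_2^k\setminus\{0\}$ so that $\E_C[\One[|C^\t s|=1](2\eta)^{|C^\t s|}]=2\eta k/(2^k-1)$ exactly cancels your normalization. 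The two are cousins (your $Cy$ is the paper's affine-in-$\sk$ adjustment applied globally, and the paper's matrix $B$ is essentially the character matrix), but your verification replaces the perturbation/least-singular-value argument with the single Fourier estimate $|\wh p(s)|\le 2\delta$ for $s\ne 0$, which you correctly derive from the Santha-Vazirani condition by conditioning on the bits before the last coordinate in $\supp(s)$. Each step of your argument checks out: $\nu_C$ is a genuine distribution since its inverse transform is $2^{-k}$ times a quantity in $[1-(2^k-1)/2^{k+2},\,1+(2^k-1)/2^{k+2}]$, and the loss factor $2^{k+2}$ arises for the same underlying reason as in the paper (weight-one characters are only a $k/(2^k-1)$ fraction of the nonzero characters). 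What your approach buys: it is shorter, and it shows the theorem holds under the strictly weaker hypothesis $\max_{s\ne 0}|\wh p(s)|\le 2\delta$ (there are non-Santha-Vazirani sources satisfying this, e.g.\ a source whose last bit is deterministic on a single exponentially-rare prefix). What the paper's approach buys: being sequential and driven by the conditional distributions, it is the natural starting point for the oracle model raised in \cref{sec:discussion}, where $p$ is specified by conditional-bias circuits rather than by its full ($2^k$-sized) table or Fourier transform, which your construction requires. The only idealization in your time bound \--- exact uniform sampling from $GL_k(\BF_2)$ requires rejection sampling \--- is of the same nature as the paper's exact sampling from $\Ber(q)$ for real $q$, so it is not a gap.
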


As an immediate consequence, for any constant $k$ and $\delta$-Santha-Vazirani source $p \in \Delta(\BF_2^k)$, the standard problem of learning $\sk$ given independent samples from $\LPN_{n,2^{k+2}\delta}(\sk)$ is polynomial-time reducible to the problem of learning $\sk$ given independent batches from $\LPN_{n,p}(\sk)$.

\subsection{Related work}\label{sec:related}

Batch LPN was first studied by Arora and Ge, who called it ``learning parities with structured noise'' \cite{arora2011new}. They required that there is some $w \in \BF_2^k$ such that $\Pr_{e,e' \sim p}[e+e' = w] = 0$. Under this structural assumption on the support of $p$, they showed that $\sk$ can be learned using $n^{O(k)}$ time and batches, which is polynomial when $k$ is constant.\footnote{In fact, their algorithm works in a more general model where the noise vector of each batch is chosen adversarially from the support of $p$. It also applies to larger fields (i.e. the learning with errors problem).} 

Little subsequent progress has been made on understanding batch LPN, either via algorithms or hardness. Most relevant is a result due to \cite{bartusek2019new}, which shows hardness of batch LPN with $k=2$ and a specific structured choice of $p$, under a standard LPN assumption, so long as the number of samples is not too large. Their choice of $p$ is not a $\delta$-Santha-Vazirani source, so their result is incomparable to \cref{lemma:construct-corr-lpn}. 

Other forms of robustness for LPN and LWE have received significant attention, but much remains unknown. As previously mentioned, the LWE assumption is known to be robust to non-uniform, high-entropy distributions for the secret \cite{brakerski2020hardness}. While LPN with high-entropy secret distributions is believed to be hard, no formal reduction from standard LPN is known \cite{yao2016robustness}. The LWE assumption is known to be robust to uniform noise distributions rather than discrete Gaussian when the sample complexity is bounded; establishing a reduction in greater generality is a long-standing open problem motivated by the computational burden of sampling a discrete Gaussian \cite{micciancio2013hardness,dottling2013lossy}. Finally, LPN is known to be robust to noise levels depending on the inner product \cite{bellizia2021learning}, another setting motivated by physical noise models.

\paragraph{Outline.} The next two sections of this note are drawn (almost verbatim) from \cite[Section 8.2]{golowich2024exploration}, and are included so that this note is self-contained. In \cref{sec:overview} below, we give an overview of the proof of \cref{lemma:construct-corr-lpn}, and in \cref{sec:proof} we give the formal proof. 

In \cref{sec:discussion} we discuss future directions and make some concluding remarks.

\section{Proof overview}\label{sec:overview}

An obvious attempt at proving \cref{lemma:construct-corr-lpn} would be to show that there is some distribution $\til p \in \Delta(\BF_2^k)$ so that if $X \sim \Ber(1/2 - 2^{k+2}\delta)^{\otimes k}$ and $Y \sim \til p$ are independent, then $Z = X+Y$ is distributed according to $p$. If this were true, one could construct a batch distributed according to $\LPN_{n,p}(\sk)$ by sampling $Y\sim\til p$ and adding $Y$ to the responses in the input batch $(a_i,y_i)_{i=1}^k$. 

Unfortunately, this is actually false, even for $k=2$. Suppose that $p$ is the distribution of the random vector $Z = (e_1 + b, e_2 + b) \in \BF_2^2$ where $e_1, e_2 \sim \Ber(1/2-\sqrt{\delta})$ and $b \sim \Ber(1/2)$ are independent. Then $Z_2|Z_1=0$ and $Z_2|Z_1=1$ are both $O(\delta)$-close to $\Ber(1/2)$, so $p$ is an $O(\delta)$-Santha-Vazirani source. However, since $\TV(Z_1+Z_2, \Ber(1/2)) = \Omega(\delta)$, the data processing inequality implies that if we have $Z=X+Y$ for independent random variables $X,Y$, then 
\begin{align*}
\TV(X_1+X_2, \Ber(1/2)) 
&\geq \TV(X_1+Y_1+X_2+Y_2, \Ber(1/2)+Y_1+Y_2) \\ 
&= \TV(Z_1+Z_2, \Ber(1/2)) \\ 
&= \Omega(\delta).
\end{align*}
Thus, if the input noise distribution is $X \sim \Ber(1/2-\eta)^{\otimes 2}$, then (by e.g. \cref{lem:bernoulli-convolve}) the bias $\eta$ must be $\Omega(\sqrt{\delta})$. Since we want the input noise distribution to have bias $O(\delta)$, this approach does not work.

For this particular choice of $p$, there is a simple transformation to construct the desired batch, with joint noise distribution $p$, from a single sample from $\LPN_{n,2\delta}(\sk)$, but it's not immediately clear how to generalize that transformation even slightly \--- e.g., if $p$ is instead the distribution of $Z = (e_1+b, e_2+b, e_3+b)$ for $e_1,e_2,e_3 \sim \Ber(1/2-\delta)$ and $b \sim \Ber(1/2)$. 

The takeaway of the above analysis seems to be that we need to add noise that is \emph{correlated} with the noise in the input LPN samples. Indeed, suppose that we have managed to construct a batch of $k-1$ LPN samples $(a'_i,y'_i)_{i=1}^{k-1}$ with joint noise distribution $p_{1:k-1}$, so all that remains is to construct a random variable $(a'_k,y'_k)$ where $a'_k \sim \Unif(\BF_2^n)$ and $y'_k$ has the appropriate conditional distribution: \[y'_k - \langle a'_k, \sk\rangle \sim \Ber(\Pr_{Z\sim p}[Z_k=1|Z_j = y'_j-\langle a'_j,\sk\rangle \, \forall j \in [k-1]]).\]
By assumption on $p$, this Bernoulli random variable has some bias $\delta' \in [-\delta,\delta]$. Thus, if we knew $\delta'$, then we could construct $(a'_k,y'_k)$ from a fresh input sample $(a_k,y_k) \sim \LPN_{n,2^{k+2}\delta}(\sk)$ by adding $y_k$ to an independent Bernoulli random variable with bias $\delta'/(2^{k+3}\delta)$. This gets around the above obstacle because $\delta'$ implicitly depends on the noise terms $(y_i-\langle a_i,\sk\rangle)_{i=1}^{k-1}$ of the input samples. Unfortunately, it therefore also depends on $\sk$, so we cannot hope to learn it.

\paragraph{Solution: adding affine noise (in $\sk$).} The first insight (previously employed for key-dependent message cryptography \cite{applebaum2009fast}) is to observe that we can sometimes ``simulate'' adding something to the noise term of an LPN sample $(a_k,y_k)$ that depends on $\sk$, by appropriately adjusting not just $y_k$ but also $a_k$. In particular, suppose that $(a_k, y_k) \sim \LPN_{n,\delta}(\sk)$ and $F(\sk) = \alpha + \langle \beta,\sk\rangle$ where $\alpha,\beta$ are independent of $(a_k,y_k)$. Then the random variable
\[(a'_k,y'_k) := (a_k - \beta, y_k + \alpha)\]
satisfies that $a'_k \sim \Unif(\BF_2^n)$ and moreover that, conditioned on $a'_k$,
\[y'_k - \langle a'_k, \sk\rangle = y_k + \alpha - \langle a_k - \beta, \sk\rangle = y_k - \langle a_k,\sk\rangle + F(\sk).\]
Thus, for all intents and purposes, we have ``added'' the affine function $F(\sk)$ to the noise term. It remains to argue that we can write the random variable $\Ber(1/2 - \delta'/(2^{k+3}\delta))$ as a (random) affine function in $\sk$.

\paragraph{Linearization.} Note that $1/2 - \delta'/(2^{k+3}\delta)$ is some known function (say, $q$) of the unknown vector $z := (y'_j-\langle a'_j,\sk\rangle)_{j=1}^{k-1}$. Thinking of $z$ as an affine function in $\sk$, we can write down the coefficients of any affine function that is a linear combination of $z_1,\dots,z_{k-1}$ (along with a constant term), i.e. $F_0 + F_1z_1 + \dots + F_{k-1}z_{k-1}$. Thus, it suffices to show that there is a distribution $\mu$ over such linear combinations so that
\begin{equation} F_0 + F_1 z_1 + \dots + F_{k-1} z_{k-1} \sim \Ber(q(z))\label{eq:fz-q-constraint}\end{equation}
for all fixed $z$. We accomplish this by a perturbation argument. It's easy to see that if $q$ is the constant function $q(z) = 1/2$, then we can simply define $\mu := \Ber(1/2)^{\otimes k}$. Of course $q$ may not be constant, but by assumption we know that $q$ is entry-wise \emph{close} to constant. Moreover, for each $z$ the induced constraint \cref{eq:fz-q-constraint} on $\mu$ is linear in the density function of $\mu$, and we can show that the system of constraints (across all $z$) is essentially well-conditioned. Hence, a small perturbation to $q$ preserves satisfiability of the system. Some care has to be taken to ensure that e.g. the perturbed $\mu$ is still a distribution, but this is the main idea.

\section{Proof of \cref{lemma:construct-corr-lpn}}\label{sec:proof}

We proceed to the formal proof of \cref{lemma:construct-corr-lpn}, starting with the following least singular bound that will be needed for the perturbation argument.

\begin{lemma}
  \label{lem:b-conditioning}
Fix $k \in \NN$ and let $B \in \RR^{2^k-1 \times 2^k-1}$ be the matrix with rows and columns indexed by nonzero vectors in $\{0,1\}^k$, where $B_{uv} = \langle u,v\rangle \bmod{2}$ for any nonzero vectors $u,v \in \{0,1\}^k$. Then we have the following properties (note that we are treating $B$ as a real-valued matrix, not a matrix over $\BF_2$):
\begin{itemize}
    \item $B\mathbbm{1} = 2^{k-1}\mathbbm{1}$.
    \item The least singular value of $B$ satisfies $\sigma_{\min}(B) \geq 2^{(k-2)/2}$.
\end{itemize}
\end{lemma}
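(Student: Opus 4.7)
The first bullet I would dispatch by direct counting: since $u \neq 0$, the $\BF_2$-linear functional $v \mapsto \langle u, v\rangle$ on $\BF_2^k$ is surjective onto $\BF_2$, so exactly $2^{k-1}$ of the $2^k$ vectors $v \in \BF_2^k$ satisfy $\langle u,v\rangle \equiv 1 \pmod{2}$. Since $v = 0$ contributes $0$ to the row sum of $B$ indexed by $u$, that row sums to exactly $2^{k-1}$, giving $B\mathbbm{1} = 2^{k-1}\mathbbm{1}$.

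For the singular value bound, my plan is to recognize $B$ as essentially the Walsh--Hadamard matrix in disguise and reduce to the standard identity $\tilde H^2 = 2^k I$. Let $\tilde H \in \{\pm 1\}^{2^k \times 2^k}$ be indexed by all of $\BF_2^k$ with $\tilde H_{uv} = (-1)^{\langle u,v\rangle}$, and let $H \in \RR^{(2^k-1)\times(2^k-1)}$ be the principal submatrix of $\tilde H$ indexed by the nonzero vectors. Using $(-1)^{\langle u,v\rangle} = 1 - 2(\langle u,v\rangle \bmod 2)$, I can write
\[B = \tfrac{1}{2}(J - H),\]
where $J$ is the $(2^k-1)\times(2^k-1)$ all-ones matrix.

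The key step is to unpack $\tilde H^2 = 2^k I$ in block form, splitting off the zero-indexed row and column of $\tilde H$ (each equal to the all-ones vector of length $2^k - 1$, with corner entry $\tilde H_{00}=1$). This single identity produces two useful consequences: $H\mathbbm{1} = -\mathbbm{1}$ (from the off-diagonal block) and $H^2 = 2^k I - J$ (from the bottom-right block). Feeding these into $B^2 = \tfrac{1}{4}(J-H)^2$, together with $J^2 = (2^k-1)J$ and $JH = HJ = -J$ (both immediate from $H\mathbbm{1} = -\mathbbm{1}$ and symmetry of $H$), routine cancellation collapses everything to
\[B^2 = 2^{k-2}(I + J).\]
Since $I+J$ has eigenvalues $2^k$ on the span of $\mathbbm{1}$ and $1$ on $\mathbbm{1}^\perp$, the smallest eigenvalue of $B^2$ is $2^{k-2}$, yielding $\sigma_{\min}(B) = 2^{(k-2)/2}$ (in fact equality, not merely the claimed $\geq$). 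As a sanity check, the top eigenvalue $2^{2(k-1)}$ matches $B\mathbbm{1} = 2^{k-1}\mathbbm{1}$ from the first bullet.

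There is no real obstacle once the insight of passing to $\tilde H$ is made; the only care needed is the block computation of $\tilde H^2$ and keeping track of signs when extracting the two identities for $H$. An alternative route, which I would fall back to only if the block bookkeeping turned out messier than expected, is to diagonalize $B$ directly in the character basis of $\BF_2^k$, but the block-matrix approach feels cleaner and more self-contained.
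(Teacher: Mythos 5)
Your proof is correct. Both bullets are fine, and the singular value argument hinges on the same central identity as the paper's proof, namely $B^\top B = B^2 = 2^{k-2}(I + \mathbbm{1}\mathbbm{1}^\top)$, after which the eigenvalue computation for $I + \mathbbm{1}\mathbbm{1}^\top$ is identical. The difference is in how you reach that identity: the paper gets it in two lines by directly counting, for distinct nonzero $u,v$, the number of $w$ with $\langle u,w\rangle \equiv \langle v,w\rangle \equiv 1 \bmod 2$ (which is $2^{k-2}$, since distinct nonzero vectors over $\BF_2$ are automatically linearly independent), whereas you derive it from the Walsh--Hadamard orthogonality relation $\tilde H^2 = 2^k I$ by writing $B = \tfrac12(J-H)$ and extracting $H\mathbbm{1} = -\mathbbm{1}$ and $H^2 = 2^kI - J$ from the block structure. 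Your route is slightly longer and requires the sign bookkeeping you flag, but it is more systematic, makes the connection to the character basis explicit, and delivers the full spectrum of $B$ \--- in particular the equality $\sigma_{\min}(B) = 2^{(k-2)/2}$ rather than just the stated lower bound (the paper's argument also gives equality, but does not remark on it). The paper's counting argument is the more economical of the two for the bound actually needed downstream.
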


\begin{proof}
For any nonzero vector $v \in \{0,1\}^k$, the number of $u \in \{0,1\}^k$ such that $\langle u,v\rangle \equiv 1 \bmod{2}$ is exactly $2^{k-1}$. The first property immediately follows. Next, for any distinct nonzero vectors $u,v \in \{0,1\}^k$, we observe that the number of $w \in \{0,1\}^k$ such that $\langle u,w\rangle \equiv \langle v,w \rangle \equiv 1 \bmod{2}$ is exactly $2^{k-2}$. Thus, $B^\t B = 2^{k-2} \mathbbm{1}\mathbbm{1}^\t + 2^{k-2} I$. It follows that $\lambda_{\min}(B^\t B) \geq 2^{k-2}$ and thus $\sigma_{\min}(B) \geq 2^{(k-2)/2}$.
\end{proof}

Using \cref{lem:b-conditioning}, we can now formally prove that \cref{eq:fz-q-constraint} is solvable for any near-uniform function $q$. Note that the distance to uniform needs to be exponentially small in the batch size $k$, hence the factor of $2^{k+2}$ in the bias of the input LPN samples in \cref{lemma:construct-corr-lpn}. It is an interesting question whether this dependence can be removed.

\begin{lemma}\label{lemma:cond-dist-linearization}
Fix $k \in \NN$, and pick any function $q: \BF_2^k \to [1/2 - 2^{-(k+3)}, 1/2+2^{-(k+3)}]$. Then there is a random variable $F = (F_0,\dots,F_k)$ on $\BF_2^{k+1}$ such that for all $z \in \BF_2^k$, it holds that
\[\Pr(F_0 + z_1 F_1 + \dots + z_k F_k \equiv 1 \bmod{2}) = q(z).\]
Moreover, there is an algorithm that takes $q$ as input and samples $F$ in time $2^{O(k)}$.
\end{lemma}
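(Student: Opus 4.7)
The plan is to construct the law $\mu$ of $F$ as an explicit additive perturbation of the uniform distribution on $\BF_2^{k+1}$, guided by Fourier analysis. First I will rewrite the target constraint in Fourier form: using $\mathbbm{1}[a=1] = (1-(-1)^a)/2$ for $a \in \BF_2$, the requirement $\Pr[F_0 + z_1 F_1 + \dots + z_k F_k \equiv 1] = q(z)$ is equivalent to
\[
\hat\mu\bigl((1,z)\bigr) \;=\; 1 - 2q(z) \qquad \text{for every } z \in \BF_2^k,
\]
where $\hat\mu(w) := \E_\mu[(-1)^{\langle w, F\rangle}]$ and $(1,z) \in \BF_2^{k+1}$. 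This pins down $2^k$ of the $2^{k+1}$ Fourier coefficients of $\mu$; additionally $\hat\mu(0) = 1$ is forced by normalization, while the remaining $2^k - 1$ coefficients $\hat\mu((0,z))$ at nonzero $z$ are free to choose.

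Next I will make the most symmetric choice and set $\hat\mu((0,z)) = 0$ for all $z \neq 0$. Fourier inversion on $\BF_2^{k+1}$ then produces the closed form
\[
\mu(F) \;=\; 2^{-(k+1)}\left(1 + \sum_{z \in \BF_2^k}(1-2q(z))(-1)^{F_0 + \langle z, F_{1:k}\rangle}\right),
\]
whose Fourier coefficients agree with the above specification by construction, so the probabilistic constraint holds immediately. Total mass $1$ follows from character orthogonality, namely $\sum_{F \in \BF_2^{k+1}} (-1)^{\langle w, F\rangle} = 0$ for each nonzero $w = (1,z)$.

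The main obstacle is verifying nonnegativity of $\mu$, which is where the hypothesis on $q$ is essential. Since $|1-2q(z)| \leq 2^{-(k+2)}$, the magnitude of the perturbation inside the parentheses is at most $2^k \cdot 2^{-(k+2)} = 1/4$, so $\mu(F) \geq \tfrac{3}{4}\cdot 2^{-(k+1)} > 0$ for every $F$; any weaker bound on $q$'s deviation from $1/2$ would break this step, which explains the exponentially small slack $\delta \leq 2^{-(k+3)}$ in \cref{lemma:construct-corr-lpn}. Sampling is then straightforward: evaluate $\mu(F)$ at all $2^{k+1}$ values of $F$ via the explicit formula in $2^{O(k)}$ arithmetic operations, and draw from the resulting table. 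This Fourier route bypasses \cref{lem:b-conditioning} by exploiting the full orthogonality relation on $\BF_2^{k+1}$; an alternative approach that parametrizes $\mu$ only through its values at nonzero $F_{1:k}$ and solves the resulting linear system directly would use the bound $\sigma_{\min}(B) \geq 2^{(k-2)/2}$ in essentially the same role as character orthogonality plays here.
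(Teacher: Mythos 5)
Your proof is correct, but it takes a genuinely different route from the paper's. The paper starts from the product baseline $\bar\mu = \Ber(q(\mathbf{0}))\times\Ber(1/2)^{\otimes k}$ (which already matches $q$ at $z=\mathbf{0}$) and then perturbs only the coordinates $f$ with $f_0=0$, solving the resulting linear system explicitly via the inverse of the inner-product matrix $B$; \cref{lem:b-conditioning} supplies both invertibility and the singular-value bound needed to keep the perturbed entries in $[0,1]$. You instead work entirely on the Fourier side: the constraints pin down $\hat\mu((1,z)) = 1-2q(z)$ and $\hat\mu(\mathbf{0})=1$, you set the remaining free coefficients $\hat\mu((0,z))$, $z\neq\mathbf{0}$, to zero, and inversion hands you a closed-form density whose correctness is automatic and whose nonnegativity follows from the crude bound $\sum_z|1-2q(z)| \le 2^k\cdot 2^{-(k+2)} = 1/4$. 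The two constructions produce different distributions (the linear system is underdetermined: $2^{k+1}$ unknowns versus $2^k+1$ constraints), and your choice is arguably cleaner \textemdash{} it bypasses \cref{lem:b-conditioning} entirely, replacing the conditioning argument with exact character orthogonality, and it in fact tolerates a larger deviation, $\norm{q-\tfrac12\mathbbm{1}}_\infty \le 2^{-(k+1)}$, than the lemma requires (so your aside that ``any weaker bound would break this step'' is not literally true of your own estimate). What the paper's route buys in exchange is a perturbative template that is localized in the probability domain rather than the Fourier domain; what yours buys is an explicit one-line formula for $\mu$ and a slightly better constant.
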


\begin{proof}
We represent $q$ as an element of $\RR^{\BF_2^k}$, and we also identify $\Delta(\BF_2^{k+1})$ with the simplex in $\RR^{\BF_2^{k+1}}$. Throughout this proof, we let $\mathbbm{1}$ denote the all-ones real vector, of appropriate dimension per the context. Set $\delta := \norm{q - \frac{1}{2}\mathbbm{1}}_\infty \leq 2^{-(k+3)}$. Let $A \in \RR^{2^{k+1} \times 2^k}$ be the matrix with rows indexed by $\BF_2^{k+1}$ and columns indexed by $\BF_2^k$, so that for any $f \in \BF_2^{k+1}$ and $z \in \BF_2^k$, we define $A_{fz} := \mathbbm{1}[f_0+f_1z_1+\dots+f_kz_k \equiv 1 \bmod{2}]$. Then for any distribution $\mu \in \Delta(\BF_2^{k+1})$, it is clear that for all $z \in \BF_2^k$,
\begin{equation}
\Prr_{F\sim \mu}(F_0+z_1F_1+\dots+z_kF_k \equiv 1 \bmod{2}) = (A^\t \mu)_z.\label{eq:fz-pr-linear}\end{equation}
We construct a distribution $\mu^\st \in \Delta(\BF_2^{k+1})$ satisfying $A^\t \mu^\st = q$ as follows. First, define $\bar \mu \in \Delta(\BF_2^{k+1})$ by 
\[\bar\mu := \Ber(q(\mathbf{0})) \times \Ber(1/2)^{\otimes k}\]
where $\mathbf{0}$ denotes the all-0s vector (taken to be of appropriate dimension per the context).  %\noah{slightly changed notation, one alternative is to write stuff like $\mathbf{0}_k$...} 
Next, define $\MZ = \BF_2^k \setminus \mathbf{0}$ and $\MF = \{0\} \times \MZ \subseteq \BF_2^{k+1}$, and define the matrix $B := (A_{\MF\MZ})^\t$. Note that $B$ is exactly the matrix defined in \cref{lem:b-conditioning}. Let $q_\MZ \in \BR^\MZ$ denote the vector whose entry corresponding to $z \in \MZ$ is $q(z)$. We construct $\mu^\st$, viewed as a vector in $\BR^{\BF_2^{k+1}}$, by separately defining its components: $\mu^\st_\MF$ for entries in $\MF$, $\mu^\st_{\mathbf{0}}$ for the $\mathbf{0}$-entry, and all other entries: 
\begin{align} \mu^\st_\MF &:=  \bar\mu_\MF + B^{-1} \left(q_\MZ - \frac{1}{2}\mathbbm{1}\right)\label{eq:pdfmu1} \\
\mu^\st_{\mathbf{0}} &:=  \bar\mu_{\mathbf{0}} + \mathbbm{1}^\t B^{-1} \left(\frac{1}{2}\mathbbm{1} - q_\MZ\right)\label{eq:pdfmu2}\\
  \mu^\st_f &:=  \bar\mu_f \qquad \forall f \in \BF_2^{k+1} \setminus (\MF \cup \{\mathbf{0}\}).\label{eq:pdfmu3}
\end{align}
% where $\mathbf{0} \in \{0,1\}^{k+1}$ denotes the all-0s vector. 
We must first show that $\mu^\st$ is a well-defined distribution on $\BF_2^{k+1}$. First, by \cref{lem:b-conditioning} and the fact that $B$ is a symmetric matrix, we have that $B$ is invertible. Moreover, 
\begin{align}\label{eq:bdiff}\norm{B^{-1}\left(q_\MZ - \frac{1}{2}\mathbbm{1}\right)}_\infty \leq \norm{B^{-1}\left(q_\MZ - \frac{1}{2}\mathbbm{1}\right)}_2 \leq 2^{(2-k)/2} \norm{q_\MZ - \frac{1}{2}\mathbbm{1}}_2 \leq 2\norm{q_\MZ - \frac{1}{2}\mathbbm{1}}_\infty \leq 2\delta\end{align}
where the second inequality is by \cref{lem:b-conditioning} and the final inequality is by definition of $\delta$. Also, by construction, we know that $\frac{\min(q(\mathbf{0}), 1-q(\mathbf{0}))}{2^k}\leq \bar\mu_f \leq \frac{1}{2^k}$ for all $f \in \BF_2^{k+1}$. It follows that for all $f \in \MF$,
\[0 \leq \frac{1/2-\delta}{2^k} - 2\delta \leq \frac{\min(q(\mathbf{0}), 1-q(\mathbf{0}))}{2^k} - 2\delta \leq \mu^\st_f \leq \frac{1}{2^k} + 2\delta \leq 1,\]
where the first inequality is because $(2^{k+2}+2)\delta \leq 1$, the second inequality is by definition of $\delta$, the third and fourth inequalities use \cref{eq:pdfmu1} and \cref{eq:bdiff}, and the final inequality is because $\delta \leq 1/4$. Next, since $B\mathbbm{1} = 2^{k-1} \mathbbm{1}$ (\cref{lem:b-conditioning}), we have $B^{-1}\mathbbm{1} = 2^{1-k}\mathbbm{1}$, so \[|\mu^\st_{\mathbf{0}}-\bar\mu_{\mathbf{0}}| = 2^{1-k}\left|\frac{\langle\mathbbm{1},\mathbbm{1}\rangle}{2} - \langle q_\MZ,\mathbbm{1}\rangle\right| \leq 2 \norm{\frac{1}{2} \mathbbm{1} - q_\MZ}_\infty \leq 2\delta\] and thus, like above, $0 \leq \mu^\st_{\mathbf{0}} \leq 1$. Finally, since $\bar\mu$ is a distribution, it's immediate from \cref{eq:pdfmu3} that $0 \leq \mu^\st_f \leq 1$ for all $f \in \BF_2^{k+1} \setminus (\MF \cup \{\mathbf{0}\})$. This shows that all entries of $\mu^\st$ are between $0$ and $1$. Next, observe that
\[\mathbbm{1}^\t \mu^\st = \mathbbm{1}^\t \bar\mu + \mathbbm{1}^\t B^{-1}\left(q_\MZ - \frac{1}{2}\mathbbm{1}\right) + \mathbbm{1}^\t B^{-1} \left(\frac{1}{2}\mathbbm{1} - q_\MZ\right) = 1\] 
where the final equality uses that $\bar\mu$ is a distribution. We conclude that $\mu^\st$ is a distribution.

Next, we must prove that $A^\t \mu^\st = q$. First recall that $\bar\mu$ is the distribution $\Ber(q(\mathbf{0})) \times \Ber(1/2)^{\otimes k}$. Thus, one can see from \cref{eq:fz-pr-linear} that $(A^\t \bar\mu)_{\mathbf{0}} = q(\mathbf{0})$, whereas for any $z \in \MZ$, $(A^\t \bar\mu)_z = 1/2$. Now since $A_{f,\mathbf{0}} = 0$ for all $f \in \MF \cup \{\mathbf{0}\}$, and $\mu^\st_{f} = \bar\mu_{f}$ for all $f \in \BF_2^{k+1}\setminus(\MF\cup\{\mathbf{0}\})$, we see that $(A^\t \mu^\st)_{\mathbf{0}} = (A^\t \bar\mu)_{\mathbf{0}} = q(\mathbf{0})$ as desired. Moreover, using that $A_{\mathbf{0},z} = 0$ for all $z \in \BF_2^k$,
\[(A^\t \mu^\st)_\MZ = (A^\t \bar\mu)_\MZ + A_{\MF\MZ}^\t (\mu^\st_\MF - \bar\mu_\MF) = \frac{1}{2}\mathbbm{1} + BB^{-1}\left(q_\MZ - \frac{1}{2}\mathbbm{1}\right) = q_\MZ\]
as needed. Since $\MZ\cup\{\mathbf{0}\} = \BF_2^k$, this proves that $A^\t\mu^\st = q$.

Finally, the claimed time complexity bound for sampling $F \sim \mu^\st$ follows from the fact that the probability mass function $\mu^\st$ (of size $2^{k+1}$) has an explicit formula (described in \cref{eq:pdfmu1,eq:pdfmu2,eq:pdfmu3}) and can be computed in time $2^{O(k)}$.
\end{proof}

%\noah{maybe somewhere say that arithmetic is modulo 2 when it should be...}

%\noah{I am more used to seeing $Z_{-i}$, though if  $Z_{\neg i}$ is standard in some areas can change it back}\dhruv{You're probably right, $Z_{-i}$ is fine}

We can now complete the proof of \cref{lemma:construct-corr-lpn}, following the sketch described above for generating the $k$-th sample of the batch given the first $k-1$ samples (and the target distribution $p$).

\begin{namedproof}{\cref{lemma:construct-corr-lpn}}
Throughout, $\sk \in \BF_2^n$ is fixed and unknown to the algorithm. The algorithm $\EntLPN$ constructs $(a'_i,y'_i)_{i=1}^k$ iteratively from $i=1$ to $k$. At each step $i$, we will maintain the invariant that $(a'_1,\dots,a'_i,(y'_1-\langle a'_1,\sk\rangle,\dots,y'_i-\langle a'_i,\sk\rangle))$ is distributed according to $\Unif(\BF_2^n)^{\otimes i} \times p_{1:i}$, for any $\sk \in \BF_2^n$, where $p_{1:i}$ is the marginal distribution of $Z_{1:i}$ for $Z \sim p$.

At step $i=1$, $\EntLPN$ sets
\[a'_1 := a_1\]
\[e'_1 \sim \Ber\left(\frac{1}{2} - \frac{1}{2}\frac{\frac{1}{2} - p_1(1)}{2^{k+2}\delta}\right)\]
\[y'_1 := y_1 + e'_1\]
where $p_1$ is the marginal distribution of $Z_1$ for $Z\sim p$. Note that $e'_1$ is well-defined since $p_1(1) \in [1/2-\delta,1/2+\delta]$ (by the fact that $p$ is a $\delta$-Santha-Vazirani source). By construction, $a'_1 \sim \Unif(\BF_2^n)$. Moreover, since $y_1 = \langle a_1,\sk\rangle + e_1$, where $e_1 \sim \Ber(1/2 - 2^{k+2} \delta)$ is independent of $a_1$, it follows that $y'_1 - \langle a'_1,\sk\rangle = e_1 + e'_1$ is independent of $a'_1$ and (by \cref{lem:bernoulli-convolve}) has distribution $\Ber(p_1(1))$. Thus, indeed $(a'_1, y'_1 - \langle a'_1,\sk\rangle) \sim \Unif(\BF_2^n) \times p_1$.

Now fix $1 < i \leq k$ and suppose that $\EntLPN$ has constructed $(a'_j,y'_j)_{j=1}^{i-1}$ with the desired distribution, using $(a_j,y_j)_{j=1}^{i-1}$. At step $i$, $\EntLPN$ computes the function $p^{(i)}: \BF_2^{i-1} \to [0,1]$ defined by
\begin{align}
  \label{eq:pi-defn}
  p^{(i)}(z_{1:i-1}) := \frac{1}{2} - \frac{\frac{1}{2} - \Pr_{Z\sim p}[Z_i=1|Z_{1:i-1}=z_{1:i-1}]}{2^{k+3}\delta}.
\end{align}
Since $p$ is a $\delta$-Santha-Vazirani source, it holds that $\Pr_{Z\sim p}[Z_i=1|Z_{1:i-1}=z_{1:i-1}] \in [1/2-\delta,1/2+\delta]$, and thus $p^{(i)}(z_{1:i-1}) \in [1/2 - 1/2^{k+3}, 1/2 + 1/2^{k+3}]$, for all $z_{1:i-1} \in \BF_2^{i-1}$. Applying the algorithm guaranteed by \cref{lemma:cond-dist-linearization}, $\EntLPN$ samples a random variable $(F^{(i)}_0,F^{(i)}_1,\dots,F^{(i)}_{i-1})$ with the property that
\begin{align}
  \label{eq:Fproperty}
  \Pr(F^{(i)}_0+F^{(i)}_1z_1+\dots+F^{(i)}_{i-1}z_{i-1} \equiv 1 \bmod{2}) = p^{(i)}(z_{1:i-1})
\end{align}
for all $z_{1:i-1} \in \BF_2^{i-1}$. %Note that $\MA$ can perform this operation in time $\poly(n, 2^{i-1})$: in particular, the probability mass function for $(F^{(i)}_0, \ldots, F^{(i)}_{i-1})$, of size $2^i$, is specified in \cref{eq:pdfmu1,eq:pdfmu2,eq:pdfmu3}.\dhruv{moved to previous lemma} 
Finally, $\EntLPN$ sets
\[a'_i := a_i + F^{(i)}_1 a'_1 + \dots + F^{(i)}_{i-1} a'_{i-1}\]
\[y'_i := y_i + F^{(i)}_0 + F^{(i)}_1 y'_1 + \dots + F^{(i)}_{i-1} y'_{i-1}.\] 
It remains to argue that $(a'_j,y'_j)_{j=1}^i$ has the desired distribution. To see this, first observe that the following four random variables are, by construction, mutually independent:
\begin{itemize}\setlength\itemsep{0.1em}
    \item $(a'_j,y'_j)_{j=1}^{i-1}$,
    \item $a_i$,
    \item $y_i - \langle a_i, \sk\rangle$,
    \item $(F_0^{(i)},\dots,F_{i-1}^{(i)})$.
\end{itemize}
Thus, suppose we condition on the first $i-1$ samples $(a'_j,y'_j)_{j=1}^{i-1}$. Then still $a_i \sim \Unif(\BF_2^n)$, and moreover $a_i$ is independent of $(F^{(i)}_0,\dots,F^{(i)}_{i-1})$. Thus, conditioned on $(a_j', y_j')_{j=1}^{i-1}$, we have that $a'_i$ is distributed according to $\Unif(\BF_2^n)$.

Next we also condition on $a'_i$. Since $a'_i$ is independent of $(F^{(i)}_0,\dots,F^{(i)}_{i-1})$ conditioned on $(a_j', y_j')_{j=1}^{i-1}$, the distribution of $(F^{(i)}_0,\dots,F^{(i)}_{i-1})$ conditioned on $\{ (a_j', y_j')_{j=1}^{i-1}, a_i' \}$ still satisfies the property \cref{eq:Fproperty}: for every $z \in \BF_2^{i-1}$, the distribution of $F^{(i)}_0+F^{(i)}_1z_1+\dots+F^{(i)}_{i-1}z_{i-1}$ is still $\Ber(p^{(i)}(z_{1:i-1}))$. We have that
\begin{align}
y'_i - \langle a'_i, \sk\rangle 
&= y_i - \langle a_i, \sk\rangle + F^{(i)}_0 + F^{(i)}_1(y'_1 - \langle a'_1, \sk\rangle) + \dots + F^{(i)}_{i-1}(y'_{i-1} - \langle a'_{i-1}, \sk\rangle).\label{eq:yprime-diff}
\end{align}
Using the property \cref{eq:Fproperty}, we have $F^{(i)}_0+\sum_{j=1}^{i-1} F^{(i)}_j (y'_j - \langle a'_j, \sk\rangle) \sim \Ber(p^{(i)}((y'_j-\langle a'_j,\sk\rangle)_{j=1}^{i-1}))$. Also $y_i - \langle a_i,\sk\rangle \sim \Ber(1/2 - 2^{k+2} \delta)$ is independent of $(F^{(i)}_j)_{j=0}^{i-1}$. % \noah{added conditional indep wrt $\sk$ here, which I think we need}\dhruv{I guess I want to just think of $\sk$ as fixed (i.e. not a random variable), and say that for any $\sk$, the claimed equality of distributions holds?}\noah{yeah whoops, misread the statement}
Thus, combining \cref{eq:yprime-diff}, the definition of $p^{(i)}$ in \cref{eq:pi-defn}, and \cref{lem:bernoulli-convolve}, we get
\[y'_i - \langle a'_i, \sk\rangle \sim \Ber\left(\Pr_{Z\sim p}\left[Z_i=1|Z_j = y'_j-\langle a'_j,\sk\rangle \quad \forall j \in [i-1]\right]\right).\]
It follows that the conditional distribution of $(a'_i, y'_i-\langle a'_i,\sk\rangle)$, given $(a'_j,y'_j)_{j=1}^{i-1}$, is \[\Unif(\BF_2^n) \times \Ber\left(\Pr_{Z\sim p}\left[Z_i=1|Z_j = y'_j-\langle a'_j,\sk\rangle \quad \forall j \in [i-1]\right]\right).\]
By the inductive hypothesis, we get that the distribution of $(a'_j,y'_j-\langle a'_j,\sk\rangle)_{j=1}^i$ is 
\[\Unif(\BF_2^n)^{\otimes i} \times p_{1:i}\] as desired. Finally, we note that the claimed time complexity bound on $\EntLPN$ follows from the algorithm description and \cref{lemma:cond-dist-linearization}.
\end{namedproof}

In the above proof, we used the following basic fact:

\begin{lemma}
  \label{lem:bernoulli-convolve}
Let $\delta_1, \delta_2 \in [-1/2,1/2]$. If $Z_1 \sim \Ber(1/2 -\delta_1)$ and $Z_2 \sim \Ber(1/2 - \delta_2)$ are independent, and $Z_3 := Z_1 + Z_2 \mod{2}$, then $Z_3 \sim \Ber(1/2 - 2\delta_1\delta_2)$.% \dhruv{minus sign}\noah{thx}
\end{lemma}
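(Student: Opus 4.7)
The plan is to directly compute $\Pr(Z_3 = 1)$ by case analysis, which is the only event we need to pin down since $Z_3$ takes values in $\{0,1\}$. Because $Z_3 = Z_1 \oplus Z_2$, the event $\{Z_3 = 1\}$ decomposes as the disjoint union $\{Z_1 = 0, Z_2 = 1\} \cup \{Z_1 = 1, Z_2 = 0\}$, and independence of $Z_1, Z_2$ lets us factor each joint probability.

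Using $\Pr(Z_1 = 1) = 1/2 - \delta_1$, $\Pr(Z_1 = 0) = 1/2 + \delta_1$, and similarly for $Z_2$, I would write
\[
\Pr(Z_3 = 1) = \left(\tfrac{1}{2} + \delta_1\right)\left(\tfrac{1}{2} - \delta_2\right) + \left(\tfrac{1}{2} - \delta_1\right)\left(\tfrac{1}{2} + \delta_2\right).
\]
Expanding both products and collecting terms, the cross-terms $\pm \delta_1/2$ and $\pm \delta_2/2$ cancel, leaving $1/2 - 2\delta_1\delta_2$, which matches the claimed bias.

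There is essentially no obstacle here; the only point requiring minor attention is verifying that $1/2 - 2\delta_1\delta_2 \in [0,1]$ so that $\Ber(1/2 - 2\delta_1\delta_2)$ is well-defined. Since $|\delta_1|, |\delta_2| \le 1/2$, we have $|2\delta_1\delta_2| \le 1/2$, which suffices. So the entire proof is a two-line computation plus this sanity check.
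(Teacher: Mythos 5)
Your proof is correct and is essentially identical to the paper's: both compute $\Pr(Z_3=1)$ as the sum of the two disjoint cases using independence and expand to get $1/2 - 2\delta_1\delta_2$. The added sanity check that the resulting parameter lies in $[0,1]$ is a harmless bonus the paper omits.
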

\begin{proof}
We can check that
\[\Pr[Z_3=1] = \left(\frac{1}{2}-
\delta_1\right)\left(\frac{1}{2}+\delta_2\right)+\left(\frac{1}{2}+\delta_1\right)\left(\frac{1}{2}- \delta_2\right) = \frac{1}{2}-2\delta_1\delta_2\]
as needed.
\end{proof}

\section{Discussion}\label{sec:discussion}

\cref{lemma:construct-corr-lpn} shows that the LPN hardness assumption is robust to weak dependencies in the noise terms of small batches of samples. To the authors' knowledge, this result has not been previously observed. The authors applied it to prove a computational separation between reinforcement learning and supervised learning \cite{golowich2024exploration}, where it was crucial due to the dependent nature of data in reinforcement learning. We hope it may have further applications in learning theory and cryptography.

There are a number of ways in which \cref{lemma:cond-dist-linearization} could potentially be strengthened. First, the current result shows that batch LPN with a $\delta$-Santha-Vazirani source is as hard as LPN with noise level $1/2 - 2^{k+2} \delta$. This is only non-vacuous for fairly high noise levels, since it requires $\delta \leq 2^{-(k+3)}$. Intuitively, one might expect that batch LPN with a $\delta$-Santha-Vazirani source is as hard as LPN with noise level $1/2-\delta$. Can this be formally proven? Second, the reduction incurs time complexity exponential in the batch size $k$, which is currently unavoidable since the reduction takes as input an explicit description of the joint noise distribution $p \in \Delta(\BF_2^k)$. Is there a natural and general model where $p$ is succinctly described and larger batch sizes can be handled? For instance, consider the variant of LPN with a sampling oracle. At the $m$-th sampling query, the oracle takes as input a circuit $C: \BF_2^{m-1} \to [1/2-\delta,1/2+\delta]$ from the learning agent, and returns an LPN sample with noise term $e_m \sim \Ber(C(e_{1:m-1}))$, where $e_1,\dots,e_{m-1}$ are the previous noise terms. Is this problem as hard as standard LPN with noise level $1/2-\delta$?

\paragraph{Acknowledgments.} We would like to thank Vinod Vaikuntanathan for encouraging us to post this note and providing additional context about known robustness properties of LWE/LPN.

\bibliographystyle{amsalpha}
\bibliography{bib}

\end{document}